\newtheorem{theorem}{Theorem}[section]
\newtheorem{lemma}[theorem]{Lemma}
\newtheorem{definition}[theorem]{Definition}
\newcommand{\suchthat}{\mathbin |}
\newcommand{\absval}[1]{\left\vert #1 \right\vert}
\newcommand{\suc}{\mathrm{succ}}
\newcommand{\MSO}{\textrm{MSO}}
\newcommand{\FO}{\textrm{FO}}
\newcommand{\N}{\mathds{N}}
\newcommand{\cT}{\mathcal{T}}
\newcommand{\cV}{\mathcal{V}}
\newcommand{\topmin}{\preceq_{\mathrm{top}}}
\newcommand{\cN}{\mathcal{N}}
\let\phi=\varphi
\newcommand{\CCC}{\mathcal{C}}
 \newcommand{\TTT}{\mathcal{T}}
 \newcommand{\VVV}{\mathcal{V}}
\newcommand{\ssL}{\mathsf{L}}
\renewcommand{\MSO}{\textup{MSO}\xspace}
\begin{document}

\title[succ-inv FO on Graphs with excl. top. subgraphs]{Successor-Invariant First-Order Logic on Graphs with Excluded
  Topological Subgraphs}
\author{Kord Eickmeyer}
\address{Technical University Darmstadt, Department of Mathematics, Schlossgartenstr. 7, 64289 Darmstadt, Germany}
\email{eickmeyer@mathematik.tu-darmstadt.de}

\author{Ken-ichi Kawarabayashi}
\address{National Inst. of Informatics and JST, ERATO, Kawarabayashi Large Graph Project, Hitotsubashi 2-1-2, Tokyo 101-8430, Japan}
\email{k\_keniti@nii.ac.jp}

\let\realbfseries=\bfseries
\def\bfseries{\realbfseries\boldmath}

\begin{abstract}
  We show that the model-checking problem for successor-invariant
  first-order logic is fixed-parameter tractable on graphs with
  excluded topological subgraphs when parameterised by both the size
  of the input formula and the size of the exluded topological
  subgraph. Furthermore, we show that model-checking for
  order-invariant first-order logic is tractable on coloured posets of
  bounded width, parameterised by both the size of the input formula
  and the width of the poset.

  Results of this form, i.e. showing that model-checking for a certain
  logic is tractable on a certain class of structures, are often
  referred to as algorithmic meta-theorems since they give a unified
  proof for the tractability of a whole range of problems. First-order
  logic is arguably one of the most important logics in this context
  since it is powerful enough to express many computational problems
  (e.g. the existence of cliques, dominating sets etc.) and yet its
  model-checking problem is tractable on rich classes of graphs. In
  fact, Grohe et al.~\cite{gks14} have shown that model-checking for
  $\FO$ is tractable on all nowhere dense classes of graphs.

  Successor-invariant $\FO$ is a semantic extension of $\FO$ by allowing
  the use of an additional binary relation which is interpreted as a
  directed Hamiltonian cycle, restricted to formulae whose truth value
  does not depend on the specific choice of a Hamiltonian cycle. While
  this is very natural in the context of model-checking (after all,
  storing a structure in computer memory usually brings with it a
  linear order on the structure), the question of how the
  computational complexity of the model-checking problem for this
  richer logic compares to that of plain $\FO$ is still open.

  Our result for successor-invariant $\FO$ extends previous results
  for this logic on planar graphs \cite{EngelmannKS12} and graphs with
  excluded minors \cite{ekk13}, further narrowing the gap between what
  is known for $\FO$ and what is known for successor-invariant
  $\FO$. The proof uses Grohe and Marx's structure theorem for graphs
  with excluded topological subgraphs \cite{gromar12+}. For
  order-invariant $\FO$ we show that Gajarský et al.'s recent result
  for $\FO$ carries over to order-invariant $\FO$.
\end{abstract}

\maketitle

\section{Introduction}
\label{sec:intro}

Model-checking is one of the core algorithmic problems in finite model
theory: Given a sentence $\varphi$ in some logic $\ssL$ and a finite
structure $A$, decide whether $A \models \varphi$. The problem can be
generalised by allowing $\varphi$ to have free variables, in which
case we would like to find instances $\bar a$ for which $A
\models \varphi[\bar a]$, or count the number of such
instances. One important application of this is the case where
$\varphi$ is a database query and $A$ the datebase to be queried. The
logic $\ssL$ from which $\varphi$ is drawn then serves as an abstract
model of the database query language.

Commonly studied logics $\ssL$ include first-order logic ($\FO$) and
monadic second-order logic ($\MSO$). Even for first-order logic the
model-checking problem is PSPACE complete already when restricted to
structures $A$ with two elements. On the other hand, for every fixed
$\FO$-formula $\varphi$, checking whether $A \models \varphi$ can be
done in time polynomial in the size of $A$. This discrepancy between
the \emph{query complexity}, i.e. the complexity depending on the size
of the query $\varphi$ on the one hand and the \emph{data complexity},
i.e. the complexity depending on the size of the structure $A$, on the
other hand suggests that the complexity of model-checking problems is
best studied in the framework of \emph{parameterised
  complexity}~\cite{DowneyF98,FlumG06}.

In parameterised complexity, apart from the size $n$ of the input
problem (commonly the length of an appropriate binary representation
of $\varphi$ and $A$) a \emph{parameter} $k$ is introduced. For
model-checking problems the size of the input formula is a common
choice of parameter. The role of PTIME as the class of problems
commonly considered to be tractable is played by the parameterised
complexity class of \emph{fixed-parameter tractable (fpt)} problems,
i.e. problems which can be solved in time
\[
f(k)\cdot n^c
\]
for some computable function $f$ and a constant $c$. Note that the
constant $c$ must not depend on $k$, and indeed the model-checking
problem for first-order logic is unlikely to be fixed-parameter
tractable.

In order to obtain tractable instances of model-checking problems, one
can restrict the space of admissible input structures $A$, e.g. by
requiring the Gaifman graph of $A$ to possess certain graph theoretic
properties such as bounded degree or planarity. A long list of results
have been obtained, starting with Courcelle's famous result that
model-checking for monadic second-order logic is fixed-parameter
tractable on structures $A$ with bounded
tree-width~\cite{Courcelle90}.

Results of this form are often referred to as \emph{algorithmic
  meta-theorems} because many classical problems can be rephrased as
model-checking problems by formalising them as a sentence $\varphi$ in
a suitable logic. For example, since the existence of a Hamiltonian
cycle in a graph $G$ of bounded tree-width can be expressed by a
sentence $\varphi$ of monadic second-order logic, Courcelle's Theorem
immediately implies that hamiltonicity can be checked in polynomial
time on such graphs. Besides giving a mere proof of tractability,
algorithmic meta-theorems provide a unified treatment of how
structural properties can be used in algorithm
design. Cf.~\cite{Grohe07} and \cite{Kreutzer11} for excellent surveys
of the field of algorithmic meta-theorems.

The model-checking problem for first-order logic is particularly well
studied and has been shown to be fixed-parameter tractable on a large
number of graph classes: Starting with Seese's result~\cite{Seese96}
for graphs of bounded degree, Frick and Grohe showed tractability on
classes of graphs with bounded tree-width and, more generally, locally
bounded tree-width~\cite{FrickGro01}, which in particular includes
planar graphs. This has been generalised to graph classes with
excluded minors~\cite{FlumG01} and locally excluded
minors~\cite{DawarGroKre07}. Using rather different techniques, {Dvo\v
  r\'ak} et al. gave a linear fpt model-checking algorithm for
first-order logic on graphs of bounded
expansion~\cite{DvorakKT13}. As a generalisation of all the
graph classes mentioned so far, Grohe et al. have shown
in~\cite{gks14} that model-checking for first-order logic is possible
in near-linear fpt on all nowhere dense graph classes.

While the tractability of model-checking for first-order logic on
sparse graphs is well understood, few results are available for
classes of dense graphs. Recently, Gajarský et al. gave an fpt
algorithm for $\FO{}$ model-checking on posets of bounded width, which
we extend to order-invariant $\FO$ in Section~\ref{sec:dense}.

\subsection*{Excluded Topological Subgraphs}

A more general concept than that of a class of graphs excluding some
graph $H$ as a minor is that of graphs which exclude $H$ as a
\emph{topological subgraph}. This is the concept originally used by
Kuratowski in his famous result that a graph is planar if, and only
if, it does not contain $K_5$ nor $K_{3,3}$ as a topological subgraph
(cf. Section 4.4 in~\cite{diestel}). Recently, Grohe and Marx have
extended Robertson and Seymour's graph structure theorem to classes of
graphs excluding a fixed graph $H$ as a topological
subgraph~\cite{gromar12+}: These graphs can be decomposed along small
separators into parts which exclude $H$ as a minor and parts in which
all but a bounded number of vertices have small degree.

Since every topological subgraph of a graph $G$ is also a minor of
$G$, if a class $\CCC$ of graphs excludes some graph $H$ as a
topological subgraph then it also excludes $H$ as a minor. The
converse is not true, however, since every $3$-regular graph excludes
$K_5$ as a topological subgraph, but for every $r \in \N$ there is a
$3$-regular graph containing $K_r$ as a minor. On the other hand,
graph classes with excluded topological subgraphs have bounded
expansion, so model-checking for first-order logic is tractable on
these classes by {Dvo\v r\'ak} et al.'s
result.

Figure~\ref{fig:sparseclasses} shows an overview of sparse graph
classes on which model-checking for first-order logic is
tractable. Note that a class $\CCC$ of graphs excludes some finite
graph $H$ as a topological subgraph if, and only if, there is an $r
\in \N$ such that $\CCC$ excludes the clique $K_r$ as a topological
subgraph.

\begin{figure}
  \begin{center}
    \resizebox{.8\textwidth}{!}{\includegraphics{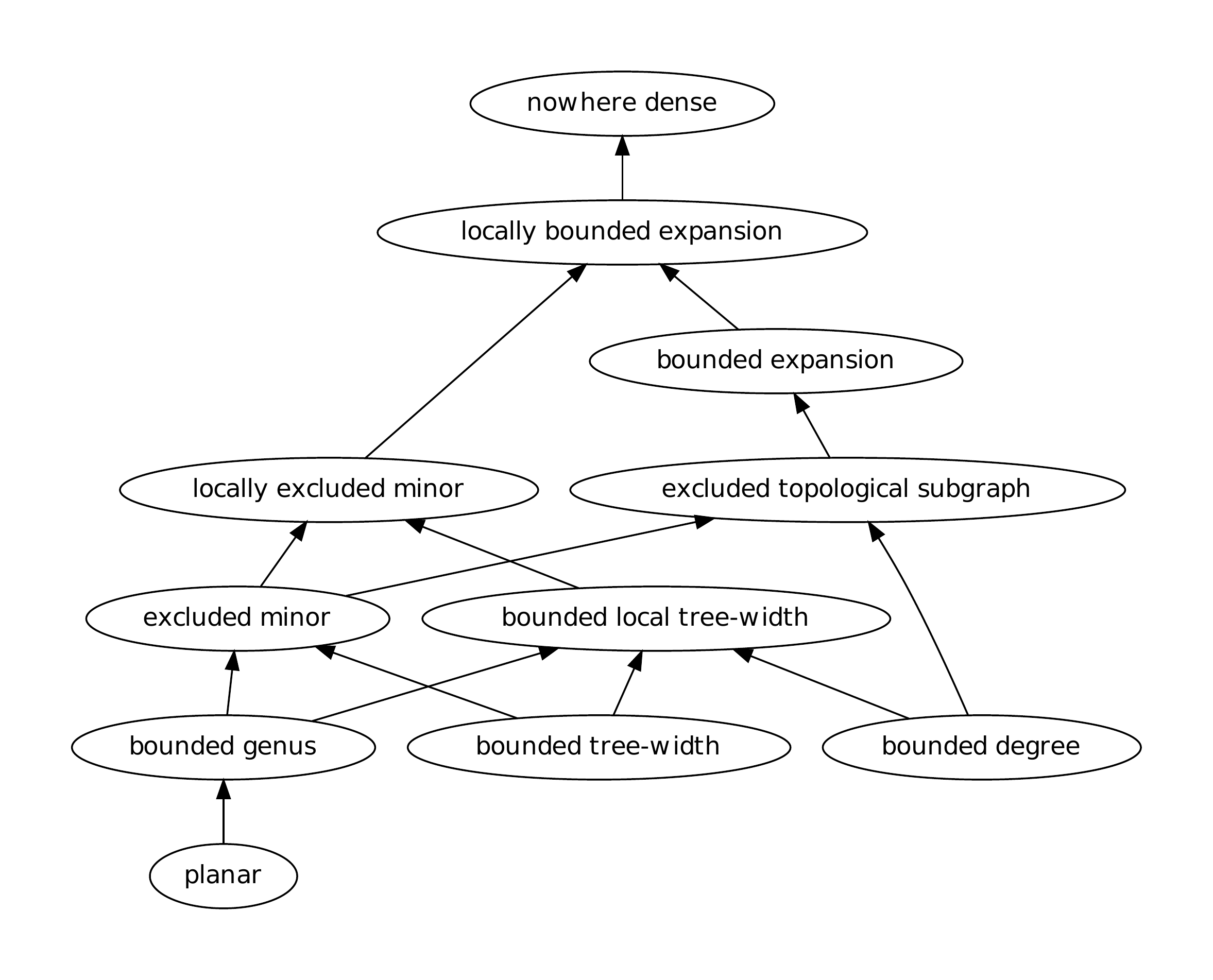}}
  \end{center}
  \caption{Sparse classes of graphs on which model-checking for
    first-order logic is tractable.}
  \label{fig:sparseclasses}
\end{figure}

\subsection*{Successor-Invariant Logic}

We investigate the question in how far tractability results for
first-order model-checking carry over to \emph{successor-invariant
  first-order logic}, i.e. first-order logic enriched by a binary
successor relation, restricted to formulae whose truth value does not
depend on the specific choice of successor relation. Linear
representations of an input structure $A$ to a model-checking
algorithm usually induce some linear order on the elements of $V(A)$,
and it seems natural to make this linear order (or at least its
successor relation) accessible to the query formula. This may,
however, break the structural properties of the Gaifman graph of $A$
needed by the model-checking algorithm.

Having access, even invariantly, to a successor relation provably
increases the expressive power of $\FO$ on finite structures, as shown
in~\cite{Rossman07}. However, all known classes of structures
separating $\FO$ from order-invariant or successor-invariant $\FO$
contain large cliques, and in fact on trees~\cite{BenediktS09} and on
structures of bounded tree-depth~\cite{EickmeyerEH14} even
order-invariant $\FO$ has the same expressive power as plain $\FO$. On
all the classes depicted in Figure~\ref{fig:sparseclasses}, this
question is still open, prompting for tractability results for
successor-invariant or even order-invariant $\FO$ on these classes.

Previous work investigating the complexity of model-checking for
successor-invariant first-order logic to that of plain first-order
logic has been carried out by \cite{EngelmannKS12}, who showed
tractibility on planar graphs, and \cite{ekk13}, who showed
tractability on graph classes with excluded minors. Here we extend
these results further by generalising from excluded minors to excluded
topological subgraphs, further narrowing the gap between what is known
for first-order logic and succesor-invariant first-order logic.

Note that for first-order logic, the result of \cite{gks14} is optimal
if one restricts attention to classes of graphs which are closed under
taking subgraphs. In fact, Kreutzer has shown in~\cite{Kreutzer11}
that under the complexity theoretic assumption that $\mathrm{FPT} \not
= \mathrm{W}[1]$, if model-checking for $\FO$ on some subgraph-closed
class $\CCC$ of graphs is fixed-parameter tractable, then $\CCC$ is
nowhere dense (see also~Section~1.4 of~\cite{DvorakKT13}). Examples of
classes of graphs on which model-checking is fpt even for monadic
second-order logic but which are not nowhere dense are graphs of
bounded clique-width~\cite{CourcelleMakRot00}.

\section{Preliminaries and Notation}
\label{sec:prelim}

For a natural number $n$ we let $[n]$ denote the interval
$\{1,\ldots,n\}$.

\subsection{Graphs}
\label{sec:graphprelim}

We will be dealing with finite simple (i.e. loop-free and without
multiple edges) graphs, cf.~\cite{diestel,tutte} for an in-depth
introduction. Thus a \emph{graph} $G = (V,E)$ consists of some finite
set $V$ of \emph{vertices} and a set $E \subseteq \binom{V}{2}$ of
\emph{edges}. We write $uv \in E$ for $\{u,v\} \in E$. For a set $U
\subseteq V$ we denote the \emph{induced subgraph} on $U$ by $G[U]$,
i.e. the graph $(U,E')$ with
\[
E' := \{ uv \suchthat u,v \in U\text{ and }uv \in E \}.
\]
For ease of notation we occasionally blur the distinction between a
set $U$ of vertices and the subgraph induced on this set. The union $G
\cup H$ of two graphs $G = (V,E)$ and $H = (U,F)$ is defined as the
graph $(U\cup V, E \cup F)$. For a set $U$ of vertices, $K[U]$ denotes
the complete graph (or \emph{clique}) with vertex set $U$. For $k \in
\N$, we denote the $k$-clique $K[[k]]$ by $K_k$.

A \emph{walk} is a sequence of vertices $v_1,\ldots,v_{\ell} \in V$,
alternatively written as a function $v:[\ell] \to V$, such that
$v_iv_{i+1} \in E$ for all $i=1,\ldots,\ell-1$. A \emph{path} is a
walk in which $v_i \not= v_j$ for $i \not= j$, except possibly $v_1 =
v_{\ell}$, in which case the path is called a \emph{cycle}. The vertices
$v_2,\ldots,v_{\ell-1}$ are called \emph{inner vertices}. Two paths
$v_1,\ldots,v_{\ell}$ and $w_1,\ldots,w_{m}$ are called
\emph{independent} if neither of them contains an inner vertex of the
other, i.e. if $v_i=w_j$ implies $i \in \{1,\ell\}$ and $j \in
\{1,m\}$.

For $k \geq 1$, a \emph{$k$-walk through} a graph $G = (V,E)$ is a
walk $w : [\ell] \to V$ such that 
\[
1 \leq \absval{ \{ i \in [\ell] \suchthat w(i) = v \} } \leq k
\]
for all $v \in V$.
A $1$-walk is also called a \emph{Hamiltonian path}.

\subsubsection*{Tree-Decompositions}

A \emph{tree} is a connected acyclic graph.
A \emph{tree-decomposition} of a graph $G = (V,E)$ is a pair
$(\cT,\cV)$ consisting of a tree $\cT = (T,F)$ and a mapping $\cV:T\to
2^V, t \mapsto \cV_t$ such that
\begin{itemize}
\item $\bigcup_{t \in T} \cV_t = V$,
\item for every edge $uv \in E$ there is a $t \in T$ with $u,v \in
  \cV_t$, and
\item for every $v \in V$ the set $\{ t \in T \suchthat v \in \cV_t \}$ is a
  subtree of $\cT$ (i.e. it is connected).
\end{itemize}
The sets $\cV_t$ are called the \emph{bags} of the
tree-decomposition. Let $t \in \cT$ have neighbours
$\cN(t) \subseteq \cT$. The \emph{torso} $\bar{\cV}_t$ of
$\cV_t$ is the graph
\[
G[\cV_t] \cup \bigcup_{u \in \cN(t)} K[\cV_t \cap \cV_u].
\]

Rather than decomposing a graph $G$ into small parts we are interested
in decompositions $(\TTT,\VVV)$ for which (the torsos of) all bags
$\VVV_t$ have nice structural properties and for which
\[
\absval{\VVV_s \cap \VVV_t}
\]
is small for all $s \not = t \in \TTT$. The \emph{(maximal) adhesion} of
$(\TTT,\VVV)$ is the maximum of $\absval{\VVV_s \cap \VVV_t}$ for all
$s \not= t \in \TTT$.

\subsubsection*{Subgraphs, Minors, Topological Subgraphs}

Let $G = (V,E)$ and $H = (W,F)$ be graphs. If $W \subseteq V$ and $F
\subseteq E$ then we call $H$ a \emph{subgraph} of $G$ and write $H
\leq G$. In other words, $H$ can be obtained from $G$ be removing
vertices and edges.

We say that $H$ is a \emph{minor} of $G$, written $H \preceq G$, if
there are disjoint connected nonempty subgraphs $(B_w)_{w \in W}$ in
$G$ such that for every edge $xy \in F$ there is an edge $ab \in E$
for some $a \in B_x$ and $b \in B_y$. The sets $(B_w)_{w \in W}$ are
called \emph{branch sets} of the minor $H$. Equivalently, $H \preceq
G$ if $H$ can be obtained by repeatedly contracting edges in a
subgraph of $G$.

A graph $H'$ is a \emph{subdivision} of a graph $H$ if it can be
obtained from $H$ be replacing edges with paths. If $H' \leq G$ for
some subdivision $H'$ of $H$ we say that $H$ is a \emph{topological
  subgraph} of $G$ and write $H \topmin G$. In this case there is an
injective mapping $\iota : W \to V$ and independent paths
$P_{\iota(u)\iota(v)}$ connecting $\iota(u)$ to $\iota(v)$ in $G$ for
$uv \in F$. The vertices in the image of $\iota$ are called
\emph{branch vertices}. Obviously $H \topmin G$ implies $H \preceq G$,
but the converse is not in general true.

\subsection{Logics}
\label{sec:logicprelim}

We will be dealing with finite structures over finite, relational
vocabularies. Thus a \emph{vocabulary} $\sigma$ is a finite set of
relation symbols $R$, each with an associated \emph{arity} $a(R)$, and
a $\sigma$-\emph{structure} $A$ consists of a finite set $V(A)$ (the
\emph{universe}) and relations $R(A) \subseteq A^{a(R)}$ for all $R
\in \sigma$. For vocabularies $\sigma \subseteq \tau$ and a
$\sigma$-structure $A$, a $\tau$-\emph{expansion} $B$ is a
$\tau$-structure with $V(A) = V(B)$ and $R(B) = R(A)$ for all $R \in
\sigma$.

The \emph{Gaifman graph} of a structure $A$ is the graph with vertex
set $V(A)$ and edge set
\[
\{ xy \suchthat x\text{ and }y\text{ appear together in some relation }R(A) \}.
\]
When applying graph-theoretic notions such as planarity to relational
structures, we mean that the corresponding Gaifman graph has the said
property.

We use standard definitions for first-order logic (FO),
cf.~\cite{eft96,ebbflu99,lib04}. In particular, $\bot$ and $\top$
denote false and true, respectively. Let $\sigma$ be a vocabulary and
$\suc \not\in \sigma$ a new binary relation symbol. We set
$\sigma_\suc := \sigma\cup\{\suc\}$ and say that $\suc$ is interpreted
by a \emph{successor relation} in a $\sigma_\suc$-structure $B$ if
$\suc(B)$ is the graph of a cyclic permutation on $V(B)$. An
$\FO[\sigma_\suc]$-formula $\varphi$ is called
\emph{successor-invariant} if for all $\sigma$-structures $A$ and all
$\sigma_\suc$-expansions $B,B'$ of $A$ in which $\suc$ is interpreted
by a successor relation we have
\[
B \models \varphi \quad\Leftrightarrow\quad
B' \models \varphi,
\]
when all free variables of $\varphi$ are interpreted identically in
$B$ and $B'$. In this case we say that $A \models \varphi$ if $B
\models \varphi$ for one such expansion $B$ (equivalently for all such
expansions).

Note that another common definition of successor relation is to
require $\suc(A)$ to be of the form
\[
\{ (a_{1},a_{2}), (a_{2},a_{3}),\ldots,(a_{n-1},a_{n}) \}
\]
for some enumeration $V(A) := \{ a_1,\ldots,a_n \}$ of the elements of
$V(A)$. This differs from our definition in that we require $(a_n,a_1)
\in \suc(A)$ as well, eliminating the somewhat artificial status of
the first and last element. The expressive power of
successor-invariant FO is not affected by this, though the quantifier
rank of formulas may change.

Order-invariant first-order logic is defined analogously to
successor-invariant $\FO$, by allowing the use of a binary relation
$\leq$ which is interpreted as a linear order and demanding the truth
value of a formula to be independent of the chosen linear order.

\section{Model-Checking for Successor-Invariant First-Order Logic}

The main result of this paper is the following:
\begin{theorem}
  \label{thm:mainthm}
  There is an algorithm $\mathds{A}$ with the following properties:
  Let $H$ be a finite graph, $\sigma$ a relational vocabulary, $\varphi
  \in \FO[\sigma_{\mathrm{succ}}]$ a successor-invariant formula, and
  $A$ a $\sigma$-structure whose Gaifman graph does not contain $H$ as
  a topological subgraph. Then on input $H$, $A$ and $\varphi$ the
  algorithm $\mathds{A}$ checks whether
  \[
  A \models \varphi
  \]
  in time $f(\absval{V(H)} + \absval{\varphi})\cdot \absval{V(A)}^c$ for some
  computable function $f$ and $c \in \N$.
\end{theorem}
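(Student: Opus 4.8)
The plan is to follow the general strategy of the Engelmann--Kreutzer--Siebertz and Eickmeyer--Kawarabayashi--Kreutzer results for excluded minors, but plug in Grohe and Marx's structure theorem for graphs with an excluded topological subgraph in place of the Robertson--Seymour decomposition. The overall goal is a Gaifman-locality argument: by Gaifman's theorem it suffices to evaluate, in fpt, formulas of the form ``there exist $r$-scattered tuples each satisfying a local formula,'' so everything reduces to computing, for each element $a$, the local type of the $r$-neighbourhood $N_r^G(a)$ in the successor-enriched structure. The key point is that the truth value of a successor-invariant sentence on $A$ does not depend on the choice of cyclic permutation, so we are free to \emph{construct} a convenient one. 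Concretely, we want to exhibit a successor relation (equivalently a Hamiltonian cycle on $V(A)$, possibly after padding the universe with a bounded number of isolated vertices, which does not change the Gaifman graph's exclusion of $H$ up to replacing $H$ by a slightly larger graph) whose edges lie ``close'' to the edges of $G$ in the decomposition, so that adding the successor edges only increases $r$-neighbourhoods by a bounded, controllable amount.

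First I would invoke Grohe--Marx to obtain a tree-decomposition $(\TTT,\VVV)$ of $G$ of bounded adhesion in which every torso $\bar\VVV_t$ is either (i) almost-embeddable on a surface of bounded genus after deleting a bounded apex set, or (ii) has all but a bounded number of vertices of degree at most some fixed $d$. The heart of the argument is a combinatorial lemma: any graph $G$ admitting such a decomposition has a \emph{$k$-walk} for some $k = k(H)$, i.e.\ a closed walk visiting every vertex at least once and at most $k$ times, which moreover can be computed in fpt and whose consecutive pairs either coincide with edges of $G$ or jump across an adhesion set. From a $k$-walk one extracts a Hamiltonian cycle on a bounded blow-up of $V(A)$ in the standard way (replace each vertex by up to $k$ copies, route the walk through the copies), and since the blow-up is definable and the copies are indistinguishable, a successor-invariant sentence on $A$ can be translated into a successor-invariant sentence on the blow-up of bounded-size-dependent quantifier rank. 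The existence of a $k$-walk on the two types of torso pieces is where surfaces and bounded degree enter: for the bounded-genus almost-embeddable pieces one uses that graphs on a surface of genus $g$ with the apices removed have a $k(g)$-walk (a classical extension of the fact that planar graphs have $3$-walks / $4$-walks), handling the apex vertices by brute insertion since there are boundedly many; for the bounded-degree pieces one uses that a connected graph with $n$ vertices has a closed walk of length at most $2n-1$, which yields a $2$-walk directly, again inserting the bounded exceptional set. The walks on the pieces are then glued along the decomposition tree using the shared adhesion vertices as splice points, which is routine once the adhesion is bounded.

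With the successor relation in hand and lying close to $G$, I would argue that in the enriched structure $B = (A,\suc)$ the $r$-neighbourhood of any element is contained in the union of $O(r)$ (a function of $r$, $k$, and the adhesion bound) bags-worth of vertices, hence has bounded ``complexity'' in the sense needed to compute its local first-order type. More precisely, $B$'s Gaifman graph still admits a tree-decomposition of bounded adhesion whose torsos are, up to adding the bounded-degree successor edges, of the two Grohe--Marx types; one then runs the known fpt $\FO$ model-checking machinery for such decompositions (as in Grohe--Marx, or equivalently Dvořák--Král'--Thomas for bounded expansion, since excluded-topological-subgraph classes have bounded expansion and adding a Hamiltonian cycle keeps expansion bounded) on $B$ to evaluate the translated successor-invariant sentence. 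Invariance guarantees the answer is independent of which successor relation we produced, so it equals $A \models \varphi$. The main obstacle I expect is the $k$-walk lemma for the almost-embeddable pieces: getting a bounded-length closed walk that simultaneously respects the vortices and the apex set of the almost-embedding, and doing so \emph{constructively} in fpt, is the delicate combinatorial core; once that is established, the logical translation and the appeal to existing model-checking algorithms are comparatively standard.
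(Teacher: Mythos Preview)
Your high-level strategy---use Grohe--Marx to get a tree-decomposition, build a $k$-walk piece by piece, turn it into a successor relation, then feed the enriched structure to an off-the-shelf $\FO$ model-checker for bounded-expansion classes---is exactly the paper's. The Gaifman-locality detour you sketch at the start is unnecessary and the paper does not take it; once a concrete successor has been defined inside the (still well-behaved) Gaifman graph, you just substitute it into $\varphi$ and run plain $\FO$ model-checking. The paper also avoids your blow-up: instead of replacing each vertex by $k$ copies, it uses a short lemma from \cite{ekk13} that, given a $k$-walk in the Gaifman graph, produces an $\FO$-definable successor relation on an expansion of $A$ with the \emph{same} Gaifman graph. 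Your blow-up would also work, but it is heavier and forces you to argue separately that lexicographic product with $K_k$ preserves the structural property.

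Where your proposal goes wrong is in identifying the crux. First, you mis-state Grohe--Marx: the torsos are not ``almost-embeddable or almost-bounded-degree'' but ``$K_c$-minor-free or almost-bounded-degree.'' That distinction matters, because a $k$-walk through a $K_c$-minor-free graph is precisely what \cite{ekk13} already supplies as a black box, so the ``delicate combinatorial core'' you flag (walks respecting vortices and apices) is already done. Conversely, the step you call ``routine once the adhesion is bounded''---gluing the per-bag walks along adhesion vertices---is the actual new content of the paper and is \emph{not} routine. Bounded adhesion alone is not enough: a single vertex can lie in the adhesion sets of unboundedly many children, and if you splice through it each time you destroy the bound on visits. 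In \cite{ekk13} the near-embedding told you which adhesion vertex to pick; here that information is absent. The paper's fix is a degeneracy argument: in either torso type the subgraph induced by the children's adhesion cliques is $d$-degenerate for some $d=d(H)$, so iteratively picking a low-degree vertex as the splice point guarantees each vertex is chosen only boundedly often. You would need to supply this (or an equivalent) argument; without it the gluing step fails.
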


Note that model-checking for first-order logic on nowhere dense
classes of graphs is possible in time $f(\absval{\varphi})\cdot
\absval{V(A)}^{1+\epsilon}$ for arbitrarily small $\epsilon > 0$ by a
result of Grohe et al.~\cite{gks14}. Even though a representation of a
structure $A$ in computer memory is likely to induce a linear order on
the elements of $V(A)$, making this linear order or its successor
relation accessible to the formula $\varphi$ potentially complicates
the model-checking problem. In particular, adding the cycle
corresponding to this linear order (or any other cycle through the
whole graph) to $A$ may introduce new shallow minors.

The proof of Theorem~\ref{thm:mainthm} is based on the following two lemmas:
\begin{lemma}
  \label{lem:kwalktopsub}
  For every finite graph $H$ there are constants $k \in \N$ and $c \in
  \N$ such that for every graph $G$ which does not contain $H$ as a
  topological subgraph there is a graph $G'$ and a $k$-walk $w :
  [\ell] \to V(G')$ through $G'$ such that $G'$ is obtained from $G$
  by only adding edges and $G'$ does not contain $K_c$ as a
  topological subgraph. Furthermore, $k$, $c$, $G'$ and $w$ can be
  computed, given $G$ and $H$, in time $f(\absval{V(H)})\cdot
  \absval{V(G)}^d$ for some computable function $f$ and $d \in \N$.
\end{lemma}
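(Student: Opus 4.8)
The plan is to use Grohe and Marx's structure theorem for graphs excluding a fixed topological subgraph: if $G$ does not contain $H$ as a topological subgraph, then $G$ has a tree-decomposition $(\TTT,\VVV)$ of bounded adhesion whose torsos each either exclude a fixed clique $K_{h}$ as a minor, or have all but a bounded number of "apex" vertices of degree bounded by some constant $d = d(H)$. We want to add edges to $G$ and then thread a $k$-walk through the resulting graph $G'$, while controlling the topological-clique size of $G'$. The overall strategy is to handle the two bag types separately, stitch the resulting walks together along the decomposition tree, and check that adding the stitching edges plus any edges needed inside bags does not create a large topological clique.

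First I would invoke the structure theorem to obtain $(\TTT,\VVV)$ with adhesion $\le a(H)$. Root $\TTT$, and process bags in a depth-first manner so that consecutive bags in the walk share an adhesion set. For a torso $\bar\VVV_t$ that excludes $K_h$ as a minor, I would appeal to the analogous result for excluded-minor graphs (the construction underlying the result of \cite{ekk13}): such a graph, after adding edges, admits a $k_1$-walk for $k_1 = k_1(h)$ through a supergraph still excluding a clique of bounded size as a topological subgraph. For a torso with a bounded set $Z_t$ of apices and bounded degree outside $Z_t$: the graph $\bar\VVV_t - Z_t$ has bounded degree, hence bounded-degree graphs admit short walks — concretely one can take an Euler-tour-style $2$-walk of a spanning connected subgraph (after adding edges to make it connected within the bag, which is harmless), and then route the walk through each apex in $Z_t$ a bounded number of times; since $|Z_t|$ is bounded, the walk still visits every vertex between $1$ and $k_2 = k_2(H)$ times. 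Crucially, in the bounded-degree case we add only edges incident to the bounded apex set or edges realizing a spanning tree, and a graph of bounded degree plus a bounded number of apices still excludes $K_c$ as a topological subgraph for $c = c(H)$, because branch vertices of a subdivided large clique would need large degree.

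Next I would stitch: when the DFS moves from bag $\VVV_t$ to a child $\VVV_{t'}$, the two bags share an adhesion set $S$ of size $\le a(H)$; I enter $\VVV_{t'}$ via a vertex of $S$, traverse the child's walk (recursively), and return to $S$, so each adhesion vertex is used a bounded extra number of times per incident tree edge. Since the walk inside each bag is already within a constant factor, and each vertex lies in a bounded number of bags only if... — this is the main obstacle — in general a vertex can lie in many bags, so I must instead argue that the contributions telescope: a vertex $v$ that lies in the bag subtree $T_v$ is traversed once for its "own" bag plus a bounded number of times for each tree edge of $T_v$ it serves as an adhesion vertex along; grouping by the (connected) subtree $T_v$ and using that adhesion sets are small, one bounds the total count for $v$ by a constant depending only on $H$. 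To make this rigorous I would pick, for each $v$, a single "home" bag and only route the global walk through $v$ there and at the $O(a(H))$ transitions where $v$ is forced to appear as an adhesion vertex — formally, contract each subtree $T_v$ to see that $v$ is visited $O(\deg_{\TTT}(\text{boundary}))$ times, which is $O(a(H))$ after a standard clean-up of the decomposition making every adhesion vertex appear in boundedly many adhesion sets.

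The edges added to $G$ to form $G'$ are: (i) the edges used inside torsos by the two bag constructions, all of which live inside a single bag; (ii) the stitching edges at adhesion sets, which live inside the union of two adjacent bags. A subdivision of $K_c$ inside $G'$ has $c$ branch vertices; since adhesion is bounded, any such subdivision is essentially confined to boundedly many bags (a connected subgraph hitting many bags must use an adhesion separator), so for $c$ large enough (depending on $H$) a $K_c$-subdivision in $G'$ would induce a $K_{c'}$-subdivision in a single augmented torso, contradicting the bounded-topological-clique property we maintained in each bag. This gives $G' \not\topmin K_c$. Finally, all of these steps — computing the Grohe–Marx decomposition, the per-bag walks (Euler tours and the excluded-minor construction are polynomial), and the stitching — run in time $f(|V(H)|)\cdot|V(G)|^d$, since the decomposition is computable in that bound by \cite{gromar12+} and everything else is polynomial with constants depending only on $H$. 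The hard part is the bookkeeping in the stitching step to keep the multiplicity $k$ bounded by a constant depending only on $H$; the topological-clique bound for $G'$ and the running time are comparatively routine once the decomposition is in hand.
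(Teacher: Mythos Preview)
Your overall architecture matches the paper's: Grohe--Marx decomposition, per-bag walks (minor-free torsos via \cite{ekk13}, almost-bounded-degree torsos handled directly), stitching along the tree, and a final argument that $G'$ excludes a large topological clique. You also correctly flag the stitching as the crux. But your proposed resolution of the stitching has a genuine gap.

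The difficulty you name is real: a vertex $v$ can lie in unboundedly many bags and in unboundedly many adhesion sets. Your fix --- ``a standard clean-up of the decomposition making every adhesion vertex appear in boundedly many adhesion sets'' --- is not standard and does not follow from the adhesion bound; bounded adhesion limits the \emph{size} of each $\VVV_s \cap \VVV_t$, not the number of tree edges whose adhesion contains a given $v$. The paper's solution is different and hinges on an idea your sketch does not contain. First, each per-bag walk $w_t$ is taken through $\VVV_t \setminus \alpha_t$ only, so every vertex of $G$ belongs to exactly one such walk. What remains is, for a node $t$ with children $s_1,\dots,s_n$, to choose a connection vertex $f(i) \in C_i := \alpha_{s_i}\setminus\alpha_t$ with $|f^{-1}(v)|$ bounded for every $v$. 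The key is a \emph{degeneracy} argument: each $C_i$ is a clique in the torso $\bar\VVV_t$, and $\bar\VVV_t$ is $d$-degenerate for some $d=d(H)$ (this follows from either alternative of the structure theorem). Hence the subgraph induced on $\bigcup_i C_i$ contains a vertex $v$ of degree at most $d$, and since each $C_i$ is a clique of size at most $c{+}2$, such a $v$ lies in at most $M := \sum_{j\le c+1}\binom{d}{j}$ of the $C_i$. Set $f(i)=v$ for those $i$, remove those cliques, and iterate; this gives $|f^{-1}(v)|\le M$ for all $v$ and hence a global $(k{+}M{+}1)$-walk. There is also a side case --- several children with identical $C_i$ --- which the paper handles by first chaining those children's walks into a cycle (adding two vertices per child to the parent bag) before running the selection; your proposal does not address this either.
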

\begin{lemma}
  \label{lem:succfromkwalk}
  Let $\sigma$ be a finite relational vocabulary, $A$ a finite
  $\sigma$-structure, and $w:[\ell] \to V(A)$ a $k$-walk through the
  Gaifman graph of $A$.

  Then there is a finite relational vocabulary $\sigma_k$ and a
  first-order fomula $\varphi^{(k)}_{\suc}(x,y)$, both depending only
  on $k$, and a $(\sigma\cup\sigma_k)$-expansion $A'$ of $A$ which can
  be computed from $A$ and $w$ in polynomial time, such that
  \begin{itemize}
  \item The Gaifman-graphs of $A'$ and $A$ are the same,
  \item $\varphi^{(k)}_{\suc}$ defines a successor relation on $A'$.
  \end{itemize}
\end{lemma}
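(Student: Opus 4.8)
The plan is to reduce everything to navigating a \emph{bounded-degree spanning tree} of the Gaifman graph. First I would extract from $w$ its \emph{first-occurrence tree} $T$: for every vertex $v\neq w(1)$ declare its parent to be $w(i-1)$, where $i$ is the least position with $w(i)=v$. Following parents strictly decreases the first-occurrence index, so $T$ is a spanning tree rooted at $w(1)$; since $w$ is a walk through the Gaifman graph, every edge of $T$ is a Gaifman edge; and since each of the at most $k$ visits of a vertex $u$ can be immediately followed by the first visit of at most one new vertex, $u$ has at most $k$ children, so $\Delta(T)\le k+1$. I would then let $\sigma_k$ consist of one binary symbol for the (oriented) parent--child relation of $T$, unary symbols $C_1,\dots,C_{k+1}$ recording, for each non-root $v$, which child of its parent $v$ is in a fixed (say, discovery-order) enumeration, and a unary symbol marking the root. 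Let $A'$ be the expansion of $A$ by these relations, interpreted according to $T$; it is computable from $A$ and $w$ in polynomial time, all its new relations only relate vertices that are already Gaifman-adjacent, so the Gaifman graphs of $A$ and $A'$ coincide, and $\sigma_k$ depends only on $k$.

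It remains to write a fixed first-order formula $\varphi^{(k)}_{\suc}(x,y)$ over $\sigma\cup\sigma_k$ which on $A'$ defines a cyclic permutation of $V(A)$. For this I would invoke the classical theorem (Sekanina) that the cube $G^3$ of any connected graph on at least three vertices is Hamiltonian; applied to $T$ it yields a Hamiltonian cycle of $T^3$, whose consecutive vertices are at $T$-distance at most $3$. The standard proof is recursive: it processes the tree bottom-up, combining Hamiltonian structures for the subtrees rooted at the children $c_1,\dots,c_m$ of each vertex $v$ --- taken in the order recorded by $C_1,\dots,C_{k+1}$ --- by re-routing along a bounded number of edges of $T^3$ incident to vertices near $v$. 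The point I want to exploit is that this makes the construction \emph{local}: the re-routing performed at a vertex $u$ alters only the cycle-edges at the boundedly many vertices within bounded $T$-distance of $u$, so the cycle-edge leaving a fixed vertex $z$ is affected only by the re-routings at the finitely many ancestors of $z$ that lie within a bounded $T$-distance of $z$. Hence the successor of $z$ in the Hamiltonian cycle is determined by the isomorphism type of a bounded-radius ball around $z$ in $T$ together with its child-labels --- a ball of bounded size since $\Delta(T)\le k+1$. Thus ``$y$ is the successor of $x$'' is expressible by a first-order formula $\varphi^{(k)}_{\suc}$ of size bounded in terms of $k$ only, which merely verifies this local configuration around $x$ and $y$. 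Correctness of the recursive construction then shows that on $A'$ this formula defines a single cyclic permutation; the degenerate cases $\absval{V(A)}\le 2$ (where $T^3$ is not Hamiltonian) are absorbed into an extra disjunct, and a disconnected Gaifman graph cannot occur since $A$ admits a $k$-walk.

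The main obstacle is precisely the first-order definability of the cyclic order. A cyclic permutation of $V(A)$ encodes a large amount of global information, and the obvious orders derived from the walk or from $T$ --- the order of first occurrences (equivalently, a depth-first preorder of $T$), or the order induced by an Euler tour of $T$ --- are useless for first-order logic: the successor of a deep leaf in such an order sits at a sibling of some unboundedly distant ancestor, and first-order logic cannot trace the path up to that ancestor. The whole purpose of passing to a bounded-degree spanning tree and a \emph{locally determined} cyclic order, as provided by the localized cube construction, is to make every single successor relationship witnessed inside a bounded radius; carrying out this localization cleanly --- in particular arranging the re-routing regions so that the bounded local configuration really does pin the successor down uniquely --- is where the real work of the proof lies.
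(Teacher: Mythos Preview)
The paper does not prove this lemma; it is quoted from \cite[Lemma~4.4]{ekk13} with the proof deferred there. I therefore assess your proposal on its own.

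Your overall strategy --- extract a bounded-degree spanning tree $T$ from the $k$-walk, encode $T$ by a parent relation and unary child-index labels, then first-order define the successor of a Hamiltonian cycle in $T^{3}$ --- is the right one and matches the spirit of the cited argument. The first-occurrence tree and the bound $\Delta(T)\le k+1$ are correct, and since every edge of a Hamiltonian cycle of $T^{3}$ joins vertices at $T$-distance at most~$3$, there are only boundedly many candidates for the successor of each vertex.

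However, your central locality claim fails for the vocabulary $\sigma_k$ you propose. The standard recursive construction
\[
\mathrm{HP}(T,r)\;=\;r,\ \mathrm{rev}\bigl(\mathrm{HP}(T_{c_1},c_1)\bigr),\ \ldots,\ \mathrm{rev}\bigl(\mathrm{HP}(T_{c_m},c_m)\bigr)
\]
reverses once per level of the recursion, so which of the two cycle-neighbours of a vertex is its \emph{successor} depends on the parity of its depth in $T$. Concretely, let $T$ be the path $r,v_1,\ldots,v_n$ with two extra leaves $u_1,u_2$ attached below $v_n$ (child-indices $1$ and $2$). Unwinding the recursion, the successor of $u_1$ in the resulting cycle is $u_2$ when $n$ is even and $v_n$ when $n$ is odd; yet for all $n\ge 4$ the radius-$3$ ball around $u_1$, together with your root marker and child-index labels, is identical. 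So no bounded-radius formula over your $\sigma_k$ can define the directed successor. What your predicates \emph{do} determine locally is the unordered pair of cycle-neighbours; it is the orientation that leaks global information.

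The repair is small and does not affect the Gaifman graph: add one further unary predicate to $\sigma_k$ recording the parity of each vertex's $T$-depth (computable in polynomial time from $w$). With this extra bit the orientation of every subtree segment becomes visible locally, and then the bounded-ball argument you sketch does go through, so $\varphi^{(k)}_{\suc}$ can be written down. Your plan is essentially correct, but the vocabulary you list is provably insufficient and needs this one addition; you correctly flagged the localisation step as ``where the real work lies'', and this is precisely the place where it bites.
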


Lemma~\ref{lem:succfromkwalk} is taken from~\cite[Lemma~4.4]{ekk13}
and has been proved there. We will prove Lemma~\ref{lem:kwalktopsub}
in Section~\ref{sec:kwalktopsub}. The proof of
Theorem~\ref{thm:mainthm} then is a combination of the above lemmas:
\begin{proof}[Proof of Theorem~\ref{thm:mainthm}]
  Given a $\sigma$-structure $A$, a successor-invariant
  $\sigma_{\mathrm{succ}}$-formula $\varphi$ and a graph $H$ which is
  not a topological subgraph of the Gaifman graph of $A$, we first
  compute the Gaifman graph $G$ of $A$. Using the algorithm of
  Lemma~\ref{lem:kwalktopsub} we then compute a $k$-walk $w : [\ell]
  \to V(A)$ through a supergraph $G'$ of $G$ which excludes some
  clique $K_{c'}$ as a topological subgraph.

  Let $E$ be a binary relation symbol. We expand $A$ to a
  $(\sigma\cup\{E\})$-structure $A'$ by setting
  \[
  E(A') := \{ (w(i),w(i+1)) \suchthat i \in [\ell-1] \} \cup \{
  (w(\ell),w(1)) \}.
  \]
  Then $G'$ is the Gaifman graph of $A'$, which by
  Lemma~\ref{lem:kwalktopsub} excludes $K_c$ as a topological
  subgraph.

  Using Lemma~\ref{lem:succfromkwalk} we compute, for a suitable $\tau
  \supseteq \sigma$, a $\tau$-expansion $A''$ of $A'$ and an
  $\FO[\tau]$-formula $\varphi^{(k)}_{\mathrm{succ}}(x,y)$ which
  defines a successor relation on $A''$. We substitute
  $\varphi^{(k)}_{\mathrm{succ}}(x,y)$ for $\mathrm{succ}xy$ for all
  occurences of $\mathrm{succ}$ in $\varphi$, obtaining an
  $\FO[\tau]$-formula $\tilde\varphi$ such that
  \[
  A'' \models \tilde\varphi \quad\Leftrightarrow\quad
  (A,S) \models \varphi
  \]
  where $S$ the successor relation defined by
  $\varphi^{(k)}_{\mathrm{succ}}$. Note
  $\varphi^{(k)}_{\mathrm{succ}}$ and $\tau$ depend only on $H$.

  Since the Gaifman graph $G''$ of $A''$ excludes $H$ as a topological
  subgraph, there is a class $\CCC$ of graphs of bounded expansion
  such that $G'' \in \CCC$. We can therefore use {Dvo\v r\'ak} et
  al.'s model-checking algorithm for $\FO$ on $\CCC$ to check whether
  \[
  A'' \models \tilde\varphi
  \]
  in time linear in $\absval{A}$.
\end{proof}

\section{$k$-walks in Graphs with Excluded Topological Subgraphs}
\label{sec:kwalktopsub}

In this section we will prove Lemma~\ref{lem:kwalktopsub}. Given a
graph $G$ which excludes a graph $H$ as a topological subgraph, as a
first step towards constructing a supergraph $G'$ with a $k$-walk we
compute a tree-decomposition of $G$ into graphs which exclude $H$ as a
minor and graphs of almost bounded degree:
\begin{theorem}[Theorem~4.1 in~\cite{gromar12+}]
  \label{thm:topminstructure}
  For every $k \in \N$ there exists a constant $c = c(k) \in \N$ such
  that the following holds: If $H$ is a a graph on $k$ vertices and
  $G$ a graph which does not contain $H$ as a topological subgraph, then
  there is a tree-decomposition $(\TTT,\VVV)$ of $G$ of adhesion at
  most $c$ such that for all $t \in \TTT$
  \begin{itemize}
  \item  ${\bar\VVV}_t$ has at most $c$ vertices of degree larger than
    $c$, \emph{or}
  \item  ${\bar\VVV}_t$ excludes $K_c$ as a minor.
  \end{itemize}
  Furthermore, there is an algorithm that, given graphs $G$ of size
  $n$ and $H$ of size $k$ computes such a decomposition in time
  $f(k)\cdot n^{O(1)}$ for some computable function $f:\N\to\N$.
\end{theorem}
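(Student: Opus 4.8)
Since $H$ has $k$ vertices it is a subgraph of $K_k$, hence $H \topmin K_k$; as $\topmin$ is transitive, $K_k \topmin G$ would entail $H \topmin G$, so a graph that excludes $H$ as a topological subgraph also excludes $K_k$. It therefore suffices to prove the statement for $H = K_k$, and I fix a threshold $c = c(k)$, large compared with $k$ (some explicit, rapidly growing function of $k$, chosen to absorb the bounds in the linkage lemma below).

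The combinatorial heart of the argument — and the step I expect to be the main obstacle — is a rooted strengthening of the classical fact (Bollob\'as--Thomason, Koml\'os--Szemer\'edi) that sufficient density or connectivity forces a large topological clique. Concretely, the plan is to prove a \emph{linkage lemma}: there is $\ell = \ell(k) \le c$ such that no $K_k$-subdivision-free graph contains a set $W$ of $\ell$ vertices, each of degree at least $\ell$, that is well-linked of order $\ell$ (every two disjoint subsets of $W$ of size at most $\ell$ being joined by $\ell$ pairwise internally disjoint paths); indeed such a $W$ would contain the branch vertices of a $K_k$-subdivision. The proof of this lemma is where the real work lies: via the duality between well-linked sets and balanced separators one extracts, for the relevant pairs in $W$, many internally disjoint connecting paths, and then builds the $K_k$-subdivision greedily, routing one $u_i$--$u_j$ path at a time, while the high degree of each $u_i$ supplies private ``escape'' edges and well-linkedness of order $\ell \gg k^2$ survives deletion of the $O(k^2)$ interior vertices already committed. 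Making the routing produce branch vertices exactly at the prescribed high-degree vertices is the delicate point.

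Granting the linkage lemma, I would build the decomposition by recursive separation. Let $Z$ be the set of vertices of degree $> \ell$ in $G$. Starting from $G$, for a current part $P$: if $\absval{Z \cap V(P)} \le \ell$, make $P$ a leaf bag; otherwise $Z \cap V(P)$ has more than $\ell$ vertices of degree $> \ell$, so by the linkage lemma it is not well-linked of order $\ell$, whence there is a separation of order $< \ell$ separating two disjoint $\ell$-subsets of it — split $P$ along (the trace of) that separation, put the separator into both children's bags, and recurse. Both sides are proper, so each step strictly decreases the vertex count, the recursion terminates after at most $n$ steps, and one obtains a tree-decomposition of adhesion $< \ell \le c$ in which every bag meets $Z$ in at most $\ell$ vertices; in particular every vertex outside $Z$ has degree at most $\ell \le c$ in $G$. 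A standard refinement, keeping adhesion bounded, then forces each torso $\bar\VVV_t$ into one of the two stated alternatives: torsos whose adhesion cliques would inflate the degrees of too many vertices are split further along small separators into subtorsos of bounded tree-width, which therefore exclude $K_c$ as a minor, while the rest have at most $c$ vertices of degree $> c$. A torso satisfying neither alternative would simultaneously have large tree-width and more than $\ell$ high-degree vertices, and hence — by the linkage lemma — a $K_k$-subdivision, which is impossible.

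For the algorithmic statement, every primitive used is polynomial or fixed-parameter tractable: degrees are read off directly; a separation of order $< \ell$ separating two prescribed small vertex sets, or a certificate that none exists, follows from Menger's theorem and the known (approximate) algorithms for balanced separators; and tree-width is approximated by the standard fpt routine. The recursion performs at most $n$ steps, each doing $f(k)\cdot n^{O(1)}$ work, so the whole decomposition is produced within the claimed bound. Note that the algorithm never constructs a $K_k$-subdivision: the linkage lemma is invoked only to certify that the separating step always succeeds.
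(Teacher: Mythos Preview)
The paper does not prove this theorem: it is quoted as Theorem~4.1 of Grohe and Marx~\cite{gromar12+} and invoked as a black box in Section~\ref{sec:kwalktopsub}. There is therefore no proof in the present paper to compare your sketch against; a proof is neither expected nor supplied here.

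If one nevertheless reads your sketch as an attempt at the argument of~\cite{gromar12+}, the reduction to $H=K_k$ and the overall shape (a linkage lemma turning many well-linked high-degree vertices into a $K_k$-subdivision, followed by recursive decomposition along the resulting small separators) are in the right spirit. The visible gap is the passage from ``every bag meets the high-degree set $Z$ in at most $\ell$ vertices'' to the stated dichotomy on \emph{torsos}. Your recursion bounds the number of vertices of high degree \emph{in $G$}, but the theorem speaks of degree \emph{in the torso} $\bar\VVV_t$, where the added adhesion cliques may raise the degree of arbitrarily many previously low-degree vertices. The one-line ``standard refinement'' you invoke to repair this and to produce the $K_c$-minor-free alternative (via bounded tree-width) is precisely where the substantial work of~\cite{gromar12+} lies, and your sketch gives no reason why such a refinement should yield bounded tree-width rather than merely a further decomposition of the same kind. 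The final sentence (``a torso satisfying neither alternative would \ldots\ have a $K_k$-subdivision'') is circular as stated: the linkage lemma needs well-linkedness in the torso, which does not follow from large tree-width alone without an additional argument.
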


For the rest of this section we assume a graph $G = (V,E)$ together
with a tree-decomposition $(\TTT,\VVV)$ satisfying the properties of
Theorem~\ref{thm:topminstructure} as given. We will construct
$k$-walks through each of the bags of this decomposition, for a
suitable $k$ depending only on $H$, suitably adding edges within the
bags in a way that will not create large topological subgraphs. We
will then connect these $k$-walks to obtain a $k'$-walk through all of
$G$, carefully adding further edges where necessary.

If $s,t \in \TTT$ are neighbours in $\TTT$ we will connect the
$k$-walk through $\VVV_s$ and the $k$-walk through $\VVV_t$ by
joining them along a suitably chosen vertex $v \in \VVV_s \cap
\VVV_t$. Since the resulting walk may visit $v$ a total of $k+1$
times, we must be careful not to select the same vertex $v$ more than
a bounded number of times.

We first pick an arbitrary tree node $r \in \TTT$ as the root of the
tree-decomposition. Notions such as parent and sibling nodes are meant
with respect to this root node $r$. For a node $t \in \TTT$ we define
its adhesion $\alpha_t \subseteq \VVV_t$ as
\[
\alpha_t := \begin{cases}
  \emptyset & \text{if }t = r
  \\
  \VVV_s \cap \VVV_t & \text{if }s\text{ is the parent of }t.
  \end{cases}
\]
By adding the necessary edges within the bags we may assume that each
$\VVV_t$ is identical to its torso, in other words we may assume that
$G[\alpha_t]$ is a clique for each $t \in \TTT$.

\subsection{Computing the $k$-walks $w_t$}

Let $s,t \in \TTT$ be nodes such that $s$ is the parent of $t$. It may
happen that $\alpha_s \cap \alpha_t \not = \emptyset$, and that in fact
some vertex $v \in V$ appears in an unbounded number of bags. Since we
are only allowed to visit each vertex a bounded number of times, we
first compute, for $t \in \TTT$, a $k$-walk $w_t$ through a suitable
supergraph of $\VVV_t \setminus \alpha_t$.

If ${\bar\VVV}_t$ contains only $c$ vertices of degree larger than $c$
we choose an arbitrary enumeration $v_1,\ldots,v_{\ell}$ of
$\VVV_t\setminus \alpha_t$ and add edges
\[
v_1v_2, v_2v_3, \ldots,v_{\ell-1}v_\ell,v_{\ell}v_1
\]
to $G$. This will increase the degree of each vertex by at most $2$,
so there are still at most $c$ vertices of degree larger than
$c+2$. We set
\[\begin{split}
w_t : [\ell] &\to \VVV_t
\\
i &\mapsto v_i
\end{split}
\]
for these bags.

If, on the other hand, ${\bar\VVV}_t$ exludes a clique $K_c$ as
a minor, we invoke the following lemma on the graph $\VVV_t \setminus
\alpha_t$:
\begin{lemma}[Lemma~3.3 in \cite{ekk13}]
  \label{lem:kwalkexcl}
  For every natural number $c$ there are $k,c'\in\N$ such that: If $G
  = (V,E)$ is a graph which does not contain a $K_c$-minor, then there
  is a supergraph $G' = (V,E')$ obtained from $G$ by possibly adding
  edges such that $G'$ does not contain a $K_{c'}$-minor and there is a
  $k$-walk $w$ through $G'$. Moreover, $G'$ and $w$ can be found in
  polynomial time for fixed $c$.
\end{lemma}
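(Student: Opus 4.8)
The plan is to derive the lemma from the graph-minor structure theorem of Robertson and Seymour in its algorithmic form: for every $c$ there is an $h = h(c)$ such that every $K_c$-minor-free graph $G$ admits a tree-decomposition $(\TTT,\VVV)$ of adhesion less than $h$ in which every torso ${\bar\VVV}_t$ is, after deleting a set $A_t$ of at most $h$ apex vertices, embeddable in a surface $\Sigma_t$ of Euler genus at most $h$ apart from at most $h$ vortices, each attached along a face and carrying a path-decomposition of width at most $h$; moreover such a decomposition is computable in time $f(c)\cdot|V(G)|^{O(1)}$. Exactly as in Section~\ref{sec:kwalktopsub} I would first add edges so that every $\VVV_t$ equals its torso, so that adjacent bags meet in cliques of size less than $h$; then fix a root $r$, put $\alpha_t := \VVV_s\cap\VVV_t$ for $t\neq r$ with parent $s$ and $\alpha_r:=\emptyset$, and observe that the sets $\VVV_t\setminus\alpha_t$ partition $V(G)$, so each vertex has a unique ``home'' bag. (If $G$ is disconnected, I would first link its components by a single forest, which creates no new clique minor.)

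The first substantial step is to build, for each $t$, a spanning closed $k_0$-walk $w_t$ of a supergraph of $\VVV_t\setminus\alpha_t$ with $k_0=k_0(h)$, adding only edges that keep this graph $K_{c_1}$-minor-free for some $c_1=c_1(h)$. Inside ${\bar\VVV}_t\setminus\alpha_t$ I would set the at most $h$ apices aside; triangulate the embedded part away from the vortex faces and add further edges on $\Sigma_t$ to make it $3$-connected — both operations leave the Euler genus at most $h$, and since there remain at most $h$ bounded-width vortices and at most $h$ apices the graph stays $K_{c_1}$-minor-free — and then invoke the known fact that every $3$-connected graph of Euler genus at most $g$ has a spanning closed walk visiting each vertex at most $k(g)$ times (the planar case being Gao and Richter's $2$-walk theorem). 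Each vortex, having pathwidth at most $h$, admits a spanning closed walk of bounded multiplicity along its path-decomposition, which I splice into $w_t$ at the face cycle carrying the vortex; and the at most $h$ apices are reinserted by a short detour at one fixed vertex of the walk, after joining $A_t$ into a clique and to that vertex, at an additive cost of at most $h$ visits and without raising the clique-minor number beyond $c_1+O(h)$.

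The second step is to stitch the $w_t$ together by a depth-first traversal of $\TTT$ from $r$: while running along $w_t$, at a vertex adjacent to (a vertex of) the adhesion clique $\alpha_{t'}$ of each child $t'$ we branch into the recursively assembled closed walk of the subtree rooted at $t'$ and return. Since the sets $\VVV_t\setminus\alpha_t$ partition $V(G)$, the walk through a vertex's home bag is the only per-bag walk passing through it, so it is visited at most $k_0$ times there; the only additional visits accrue at a bounded number of ``connector'' vertices chosen inside each adhesion, and the corresponding bounded number of added connector edges per adhesion keeps the global clique-minor number at $c_1+O(h)$. Taking $k := k(c)$ and $c' := c'(c)$ to be the resulting bounds completes the construction, and each step runs in polynomial time for fixed $c$.

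I expect the stitching to be the main obstacle. A node $t$ may have unboundedly many children whose adhesion cliques $\alpha_{t'}$ overlap each other and $\alpha_t$ arbitrarily, so one must (i) ensure the traversal of $w_t$ really does pass near every child adhesion — which forces deferring any child $t'$ with $\alpha_{t'}\subseteq\alpha_t$ so that it is handled by an ancestor whose walk genuinely visits $\alpha_{t'}$ — and (ii) control that, over all adhesions, only boundedly many added edges are incident to any single vertex, so that no new large clique minor is created. A secondary point is that the per-bag step leans on the bounded-genus spanning-walk theorem, including the reduction of low-face-width embeddings to the $3$-connected case, which must be invoked with some care.
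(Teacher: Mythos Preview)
The paper does not prove this lemma; it is imported verbatim as Lemma~3.3 of \cite{ekk13} and used as a black box in Section~\ref{sec:kwalktopsub}. There is therefore no proof in the present paper to compare your proposal against.

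That said, your sketch is essentially the strategy of \cite{ekk13}, of which Section~\ref{sec:kwalktopsub} here is the direct analogue for excluded topological subgraphs: a structure theorem yielding a bounded-adhesion tree-decomposition, per-bag $k$-walks built from a bounded-genus spanning-walk result together with handling of apices and vortices, and a stitching procedure along the tree. The one place where your outline remains genuinely incomplete is precisely the obstacle you flag as~(ii). The present paper explicitly remarks, at the beginning of the subsection on connecting the walks, that in \cite{ekk13} this was resolved using the extra geometric information supplied by the Robertson--Seymour structure theorem --- namely that adhesion vertices lie among the apices or on faces and vortices of the near-embedding --- which is what allows one to select connector vertices so that each is used only boundedly often. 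Your plan is therefore on the right track, but the missing ingredient is this exploitation of where the adhesion vertices sit in the embedding; the degeneracy-based selection used in the present paper is introduced exactly because that geometric information is not available in the topological-subgraph setting.
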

Since we ignore the vertices in $\alpha_t$ when computing the
$k$-walk $w_t$, it may happen that the resulting supergraph of
${\bar\VVV}_t$ \emph{does} contain a $K_{c'}$-minor. However, the largest
possible clique minor is still of bounded size, because
$\absval{\alpha_t} \leq c$:
\begin{lemma}
  Let $G = (V,E)$ be a graph such that $K_{c'} \not\preceq G$, and let $G
  \oplus K_c$ be the graph with vertex set $V' = V \cup [c]$ and edge
  set
  \[
  E' = E \cup \binom{[c]}{2} \cup \{ va \suchthat v \in V, a \in [c] \}.
  \]
  In other words, $G \oplus K_c$ is the disjoint sum of $G$ and $K_c$
  plus edges between all vertices of $G$ and all vertices of
  $K_c$. Then $K_{c+c'} \not\preceq G \oplus K_c$.
\end{lemma}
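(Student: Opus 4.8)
The plan is to prove the contrapositive-flavoured statement directly: suppose $K_{c+c'} \preceq G \oplus K_c$ via branch sets $(B_w)_{w \in [c+c']}$; I will extract from this a $K_{c'}$-minor in $G$, contradicting the hypothesis. The key observation is that the $c$ apex vertices of the attached $K_c$ can ``absorb'' at most $c$ of the branch sets. More precisely, since $[c]$ has only $c$ vertices, at most $c$ of the branch sets $B_w$ can contain a vertex of $[c]$. Hence at least $(c+c') - c = c'$ of the branch sets, say $B_{w_1},\dots,B_{w_{c'}}$ after reindexing, are entirely contained in $V = V(G)$.

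First I would note that each such $B_{w_i} \subseteq V$ is, by definition of a minor, a nonempty connected subgraph of $G \oplus K_c$; but since all its vertices lie in $V$ and $G \oplus K_c$ restricted to $V$ is exactly $G$, each $B_{w_i}$ is in fact a nonempty connected subgraph of $G$. Next I would check the edge condition: for any two indices $w_i \neq w_j$ among the chosen $c'$, there is in $G \oplus K_c$ an edge between $B_{w_i}$ and $B_{w_j}$; since both branch sets are subsets of $V$, the endpoints of that edge both lie in $V$, so this edge is already an edge of $G$. Finally, the $B_{w_i}$ are pairwise disjoint because the original branch sets were. Thus $(B_{w_1},\dots,B_{w_{c'}})$ witnesses $K_{c'} \preceq G$, the desired contradiction.

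There is essentially no serious obstacle here — the only thing to be mildly careful about is the bookkeeping with the two additive constants ($c$ apex vertices can destroy at most $c$ branch sets, leaving $c'$ good ones) and making sure ``connected subgraph of $G \oplus K_c$ contained in $V$'' really does mean ``connected subgraph of $G$'', which is immediate since $(G \oplus K_c)[V] = G$. I would present the argument in roughly the three bullet points of the minor definition (nonemptiness and connectedness; edge preservation; disjointness), each checked against the restriction to $V$. The statement is purely combinatorial and the proof is a couple of sentences once the pigeonhole step on $[c]$ is made explicit.
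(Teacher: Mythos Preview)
Your proof is correct and follows exactly the same approach as the paper's: at most $c$ of the $c+c'$ branch sets can meet $[c]$, and the remaining $c'$ branch sets, being entirely inside $V$, witness $K_{c'}\preceq G$. The paper's version is terser (and actually contains a typo, writing $K_c$ where $K_{c'}$ is meant), while you have spelled out the verification of the minor conditions more carefully, but the argument is the same.
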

\begin{proof}
  Otherwise let $X_1,\ldots,X_{c+c'}$ be the branch sets of a
  $K_{c+c'}$-minor in $G \oplus K_c$. At most $c$ of the sets contain
  vertices of the added $K_c$-clique. The remaining sets form the
  branch sets of a $K_c$-minor in $G$, contradicting the assumption
  that $K_c \not\preceq G$.
\end{proof}

\subsection{Connecting the $k$-walks}

We still need to connect the $k$-walks through the individual bags of
$(\TTT,\VVV)$ to obtain a single $k'$-walk through the whole graph,
for some $k'$ to be determined below. This is the most complicated
part of our construction, since we must guarantee that no vertex is
visited more than $k'$ times by the resulting walk, and that no large
topological subgraphs are created.

In the case of graphs excluding some fixed minor, the Graph Structure
Theorem guarantees the existence of a tree-decomposition into nearly
embeddable graphs such that neighbouring bags intersect only in apices
and vertices lying on some face or vortex of their near embeddings,
and this was used in~\cite{ekk13} to select vertices from the adhesion
sets of bags in a suitable way. Since the decomposition theorem for
graphs excluding a topological minor does not provide this kind of
information, we need a different approach here. Our method for
selecting vertices along which to connect the $k$-walks relies on the
fact that sparse graphs are \emph{degenerate}, i.e. every subgraph of
a sparse graph contains some vertex of small degree.

In connecting the walks $w_t$, we will proceed down the tree
$\TTT$. At any point in the process we keep a set $D \subseteq \TTT$
and a walk $w$ such that
\begin{itemize}
\item $D$ is a connected subset of $\TTT$,
\item the $k'$-walk has been constructed in $\bigcup_{t \in D} \VVV_t$,
\item if $s \in D$ and $s'$ is a sibling of $D$ then also $s' \in D$,
\item $w$ is a $k'$-walk through $\bigcup_{t \in D} \VVV_t$, and if
  $t \in D$ has a child $s \not\in D$, then the vertices in $\VVV_t
  \setminus \alpha_t$ are visited at most $k+1$ times by $w$.
\end{itemize}
We start with $D = \{r\}$ and $w = w_r$, where $r$ is the root of
$\TTT$. This is easily seen to satisfy all of the above conditions.

Now let $t \in D$ be a node whose children $s_1,\ldots,s_{n}$ are not
in $D$. We let
\[
C_i := \alpha_{s_i} \setminus \alpha_t
\]
be the adhesion set of $s_i$ with all vertices of the adhesion set of
$t$ removed. If $C_i = \emptyset$ then $s_i$ can be made a sibling of
$t$ (rather than a child), so we assume that all $C_i$ are
nonempty. Since the properties of $(\TTT,\VVV)$ are guaranteed for
the torsos of the bags we may assume that $G[C_i]$ is a clique for
each $i$ and that $w$ visits the vertices of $\bigcup C_i$ at most $k+1$
times.

It may happen that $C_i = C_j$ for some $i \not= j$. To deal with
this, assume that
\[
C_1 = C_2 = \cdots = C_m \not= C_i\text{ for i > m}.
\]
For each $i = 1,\ldots,m$ we choose an edge $u_iv_i \in E(\VVV_{s_i})$
which is traversed by the walk $w_{s_i}$ in the direction from $u_i$
to $v_i$ at some point. We add edges
\[
v_iu_{i+1}\text{ for }i=1,\ldots,m-1\text{ and }v_mu_1
\]
and connect the walks $w_{s_1},\ldots,w_{s_m}$ along these
edges. Because $w_{s_i}$ is a walk through $\VVV_{s_i} \setminus
\alpha_{s_i}$, we have
\[
u_i,v_i \in \VVV_{s_j} \quad\Leftrightarrow\quad i = j
\]
for $i,j = 1,\ldots,m$. To accomodate for the extra edges, we add the
vertices $u_i$ and $v_i$ to $\VVV_t$, and therefore to $\alpha_{s_i}$
and $C_i$. Since these vertices together with the added edges form an
isolated cycle
\[
u_1v_1u_2v_2\ldots u_mv_mu_1
\]
in $\VVV_t$, no new topological subgraphs are created by this. The
maximal adhesion of $(\TTT,\VVV)$ is still bounded by $c+2$.

Therefore we now assume that the cliques $C_1,\ldots,C_{n}$ are all
distinct. It remains to find a function
\[
f : [n] \to V
\]
such that
\begin{itemize}
\item $f(i) \in C_i$ for all $i$, and
\item $\absval{f^{-1}(v)} \leq M$ for all $v \in V$ and some constant
  $M$ depending only on $H$.
\end{itemize}

We define the function $f$ iteratively on larger subsets of $[n]$ as
follows: Let $\tilde G$ be the subgraph of $G$ induced on the union of
all $C_i$:
\[
\tilde G = G\left[\bigcup_i C_i\right].
\]
We show that $\tilde G$ contains a vertex of degree (in $\tilde G$) at
most $d$, for some constant $d$ depending only on the constant $c$
from Theorem~\ref{thm:topminstructure} (and therefore only on the
excluded topological subgraph $H$ we started with). If $\VVV_t$
contains only $c$ vertices of degree larger than $c$ then this is true
with $d = c$. If $\VVV_t$ excludes some clique $K_c$ as a minor we use
the fact that these graphs are $d$-degenerate for some $d$ depending
only on $c$. In fact, by Theorem~7.2.1 in \cite{diestel} there is a
constant $d$ such that if the average degree of $\tilde G$ is at least
$d$, then $K_c \topmin \tilde G$ and therefore $K_c \preceq \tilde G$.

In both cases there is a $v \in \bigcup_i C_i$ which has degree at
most $d$ in $\tilde G$. But any such $v$ can only be in at most
\[
M := \binom{d}{0} + \binom{d}{1} + \cdots + \binom{d}{c+1}
\]
of the cliques $C_i$, which all satisfy $\absval{C_i} \leq c+2$. It is
therefore safe to define
\[
f(i) := v\text{ for all }i\in [n]\text{ such that }v \in C_i.
\]
We remove these cliques and iterate until no cliques remain.

\begin{figure}
  \begin{center}
    \input{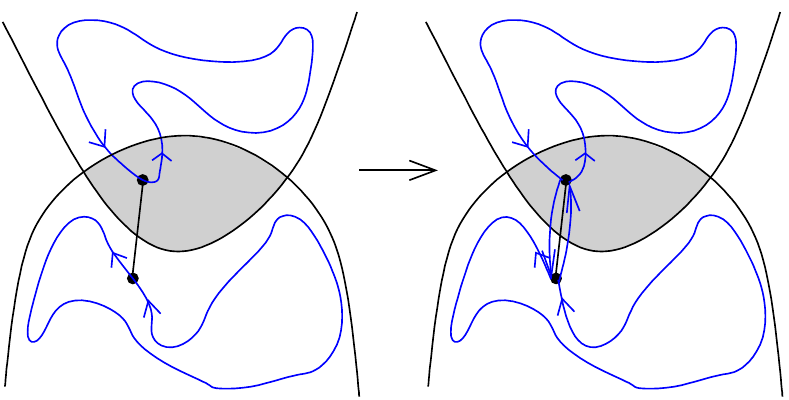_t}
  \end{center}
  \caption{Connecting the individual $k$-walks}  
  \label{fig:connect}
\end{figure}

Once the function $f$ has been found we connect the walk $w_t$ through
$\VVV_t$ with the walks $w_{s_i}$ through the bags $V_{s_i}$. Let $w :
[\ell] \to V$ be the walk constructed so far. For each $i \in [n]$ let
$v_i = f(i) \in C_i$ be the vertex chosen by $f$, and let $u_i \in
\VVV_{s_i} \setminus \alpha_{s_i}$ be a neighbour of $v_i$. If no such
neighbour exists it is safe to create one by adding an edge between
$v_i$ and an arbitrary vertex of $\VVV_{s_i} \setminus
\alpha_{s_i}$. We now extend the walk $w$ by inserting the $k$-walk
$w_{s_i}$ along the edge $v_iu_i$ when $v_i$ is first visited by
$w$. This increases the number of times $v_i$ and $u_i$ are visited by
one each (cf. Figure~\ref{fig:connect}).

After inserting all walks $w_{s_1},\ldots,w_{s_n}$ we set
\[
D := D \cup \{ s_1,\ldots,s_n \}
\]
and repeat the process until $D = \TTT$. Note that the resulting walk
is a $(k + M + 1)$-walk through the supergraph $G'$ of $G$ obtained by
adding edges to $G$.

\subsection{Topological Subgraphs in $G'$}

By now we have a supergraph $G'$ of $G$, obtained by only adding
edges, and a $k' = (k+M+1)$-walk $w : [\ell] \to V(G')$ through this
supergraph. Furthermore, there is a $c' = c'(H)$ depending only on
(the size of) $H$ and a tree-decomposition $(\TTT,\VVV)$ of $G'$ such
that if $s,t \in \TTT$ then $\absval{\VVV_s \cap \VVV_t} c'$ and for
all $t \in \TTT$
\begin{itemize}
\item ${\bar\VVV}_t$ has at most $c'$ vertices of degree larger
  than $c'$ or
\item ${\bar\VVV}_t$ excludes $K_{c'}$ as a minor.
\end{itemize}

We show that this implies $K_{c'+2} \not \topmin G'$: Assume for a
contradiction that $K_{c'+2} \topmin G$, and let $v_1,\ldots,v_{c'+2}$
be the branch vertices of a $K_{c'+2}$-subdivision in $G$. Then there
is a $t \in \TTT$ such that $\{ v_1,\ldots,v_{c'+2} \} \subseteq
\VVV_t$: Otherwise choose $i < j$ and $t \not = t'$ so that
\[
v_i \in \VVV_t \setminus \VVV_{t'}
\quad\text{and}\quad
v_j \in \VVV_{t'} \setminus \VVV_t.
\]
Then, since the adhesion of $(\TTT,\VVV)$ is at most $c'$, there is a
set $S \subseteq V$ of size at most $c'$ separating two branch
vertices, which is not possible in a $(c'+2)$-clique.

Now let $t \in \TTT$ be a tree node for which $\VVV_t$ contains all
branch vertices. For $i < j$, let $P_{ij}$ be the path in $G$
connecting $v_i$ and $v_j$. If all vertices on this path are in
$\VVV_t$ we are done. Otherwise we may shorten this path to get a path
$P_{ij}'$ connecting $v_i$ and $v_j$ in the torso of $\VVV_t$. Thus
\[
K_{c'+2} \topmin \VVV_t.
\]

But none of the bags $\VVV_t$ can contain $K_{c'+2}$ as a topological
subgraph: Since $K_{c'+2} \topmin \VVV_t$ implies $K_{c'+2} \preceq
\VVV_t$ which in turn implies $K_{c'} \preceq \VVV_t$, none of the
bags excluding $K_{c'}$ as a minor can contain $K_{c'+2}$ as a
topological subgraph. But if $K_{c'+2} \topmin \VVV_t$ then there must
be at least $c'+2$ vertices of degree at least $c'+1$, namely the
branch vertices of the image of a subdivision of $K_{c'+2}$. We
conclude that $K_{c'+2} \not\topmin G'$.

\section{Dense Graphs}
\label{sec:dense}

While model-checking for first-order logic has been studied rather
thoroughly for sparse graph classes, few results are known for dense
graphs:
\begin{itemize}
\item On classes of graphs with bounded clique-width (or,
  equivalently, bounded rank-width; cf.~\cite{OumSey06}),
  model-checking even for monadic second-order logic has been shown to
  be fpt by Courcelle et al.~\cite{CourcelleMakRot00}.
\item More recently, model-checking on coloured posets of bounded
  width has been shown to be in fpt for existential $\FO$ by Bova et
  al.~\cite{bova2015model} and for all of $\FO$ by Gajarský et
  al.~\cite{gajarsky2015fo}.
\end{itemize}

Both of these results extend to order-invariant $\FO$, and therefore
also to successor-invariant $\FO$. For bounded clique-width, this has
already been shown by Engelmann et
al. in~\cite[Thm.~4.2]{EngelmannKS12}. For posets of bounded width we
give a proof here. We first review the necessary definitions:
\begin{definition}
  A \emph{partially ordered set (poset)} $(P,\leq^P)$ is a set $P$
  with a reflexive, transitive and antisymmetric binary relation
  $\leq^P$. A \emph{chain} $C \subseteq P$ is a totally ordered
  subset, i.e. for all $x, y \in C$ one of $x \leq^P y$ and $y \leq^P
  x$ holds. An \emph{antichain} is a set $A \subseteq P$ such that if
  $x \leq^P y$ for $x,y \in A$ then $x = y$. The \emph{width} of
  $(P,\leq^P)$ is the maximal size $|A|$ of an antichain $A \subseteq
  P$. A \emph{coloured} poset is a poset $(P,\leq^P)$ together with a
  function $\lambda : P \to \Lambda$ mapping $P$ to some set $\Lambda$
  of \emph{colours}. By $\Vert P\Vert$ we denote the length of a
  suitable encoding of $(P,\leq^P)$.
\end{definition}
We will need Dilworth's Theorem, which relates the width of a poset to
the minimum number of chains needed to cover the poset:
\begin{theorem}[Dilworth's Theorem]
  Let $(P,\leq^P)$ be a poset. Then the width of $(P,\leq^P)$ is equal
  to the minimum number $k$ of disjoint chains $C_i,\ldots,C_k
  \subseteq P$ needed to cover $P$, i.e. such that $\bigcup_i C_i =
  P$.
\end{theorem}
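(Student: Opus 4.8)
The plan is to establish the two inequalities of the claimed equality separately. One direction is immediate: if $C_1,\ldots,C_k$ are disjoint chains covering $P$ and $A \subseteq P$ is an antichain, then each $C_i$ meets $A$ in at most one element (two comparable members of an antichain must coincide), so $\absval{A} \leq k$; hence the width of $P$ is at most the minimum size of a chain cover.

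For the reverse inequality I would argue by induction on $\absval{P}$, following Galvin's short argument. Write $k$ for the width of $P$ and fix a maximal chain $C \subseteq P$. If $P \setminus C$ has width at most $k-1$, the induction hypothesis covers it by $k-1$ chains, and adjoining $C$ gives a cover of $P$ by $k$ chains. Otherwise $P \setminus C$ contains an antichain $A = \{a_1,\ldots,a_k\}$ of maximum size $k$, and I would split $P$ into the down-set and up-set of $A$:
\[
S^- = \{x \in P \suchthat x \leq^P a_i \text{ for some } i\},
\qquad
S^+ = \{x \in P \suchthat a_i \leq^P x \text{ for some } i\}.
\]
Because $A$ is a maximum antichain, every element of $P$ is comparable with some $a_i$, so $S^- \cup S^+ = P$; and $S^- \cap S^+ = A$, since $x \leq^P a_i$ together with $a_j \leq^P x$ for $i \neq j$ would contradict that $A$ is an antichain. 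The top element of the maximal chain $C$ is a maximal element of $P$ different from every $a_i$, so it lies in $S^+ \setminus S^-$; symmetrically the bottom element of $C$ lies in $S^- \setminus S^+$. Thus $S^-$ and $S^+$ are both proper subsets of $P$, and each has width exactly $k$ (each contains $A$ and is contained in $P$), so the induction hypothesis applies to both.

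It then remains to glue the two chain covers. Each of the $k$ chains covering $S^-$ meets the size-$k$ antichain $A$ in at most one element, hence in exactly one, so after relabelling they are $C_1^-,\ldots,C_k^-$ with $a_i \in C_i^-$; moreover $a_i$ must be the maximum of $C_i^-$ (an element of $C_i^-$ strictly above $a_i$ would lie below some $a_j$, and a chain $a_i <^P \cdots \leq^P a_j$ is impossible inside the antichain $A$). Symmetrically, the $k$ chains covering $S^+$ are $C_1^+,\ldots,C_k^+$ with $a_i$ the minimum of $C_i^+$. Then each $C_i := C_i^- \cup C_i^+$ is a chain (any $x \in C_i^-$ and $y \in C_i^+$ satisfy $x \leq^P a_i \leq^P y$), the $C_i$ are pairwise disjoint (two distinct of them could only share an element of $A = S^- \cap S^+$, but $a_m$ lies in $C_i^-$ and in $C_j^+$ only for $i = j = m$), and they cover $S^- \cup S^+ = P$. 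The one place that needs genuine care --- the main obstacle --- is precisely this alignment step: one must combine the pigeonhole count with the extremality of $a_i$ inside its chain so that the two covers match endpoint-to-endpoint and the unions are again chains. (Alternatively, the reverse inequality can be derived by setting up an auxiliary bipartite graph whose maximum matchings correspond to minimum chain covers and invoking K\"onig's theorem, but the inductive argument above is entirely self-contained.)
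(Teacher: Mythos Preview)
Your argument is correct; it is precisely Galvin's short inductive proof of Dilworth's theorem. The paper does not prove the statement itself but defers to \cite[Sec.~2.5]{diestel}, and the proof given there is exactly the one you reproduce, so your approach coincides with what the paper invokes.
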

A proof can be found in~\cite[Sec.~2.5]{diestel}. Moreover, by a
result of Felsner et al.~\cite{felsner2003recognition}, both the width
$w$ and a set of chains $C_1,\ldots,C_w$ covering $P$ can be computed
from $(P,\leq_P)$ in time $O(w\cdot \Vert P \Vert)$.

With this, we are ready to prove the following:
\begin{theorem}
  There is an algorithm which, on input a coloured poset $(P,\leq^P)$
  with colouring $\lambda : P \to \Lambda$ and an order-invariant
  first-order formula $\varphi$, checks whether $P \models \varphi$ in
  time $f(w,|\varphi|) \cdot \Vert P\Vert^2$ where $w$ is the width of
  $(P,\leq^P)$.
\end{theorem}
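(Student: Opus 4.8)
The plan is to reduce order-invariant $\FO$ model-checking on coloured posets of bounded width to plain $\FO$ model-checking on a related coloured poset of the same (or similar) width, and then invoke Gajarský et al.'s result~\cite{gajarsky2015fo}. The key observation is that on a poset covered by $w$ chains (via Dilworth and the algorithm of Felsner et al.~\cite{felsner2003recognition}), there is a canonical way to produce a linear order that is compatible with the poset structure: fix the chain decomposition $C_1,\ldots,C_w$, order the chains $C_1 < C_2 < \cdots < C_w$, and within each chain use the chain order; then concatenate. This yields a total order $\leq$ on $P$ that is definable in $\FO$ from the poset together with the colouring that records, for each element, the index of the chain containing it. The crucial point is that this extra colouring does not increase the \emph{width} of the poset and only enlarges the colour set $\Lambda$ by a factor depending on $w$, so Gajarský et al.'s algorithm still applies with parameters depending only on $w$ and $|\varphi|$.

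First I would state precisely the auxiliary structure: given $(P,\leq^P,\lambda)$ of width $w$, compute in time $O(w\cdot\Vert P\Vert)$ a chain cover $C_1,\ldots,C_w$, and let $\lambda' : P \to \Lambda\times[w]$ be the refined colouring $\lambda'(x) = (\lambda(x), i)$ where $x\in C_i$ (breaking ties so the $C_i$ are disjoint). The poset $(P,\leq^P,\lambda')$ still has width $w$. Next I would exhibit an $\FO$-formula $\psi_{\leq}(x,y)$ in the vocabulary of $(P,\leq^P,\lambda')$ defining the total order described above: $\psi_{\leq}(x,y)$ says either $x$ and $y$ lie in chains with indices $i < j$ (read off from $\lambda'$), or they lie in the same chain $C_i$ and $x\leq^P y$. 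One checks this is reflexive, transitive, antisymmetric and total, i.e. a genuine linear order on $P$.

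Then, given an order-invariant formula $\varphi$ over the vocabulary $\{\leq^P\}$ (plus colours) with the $\leq$-relation symbol, I would form $\tilde\varphi$ by substituting $\psi_{\leq}(x,y)$ for every atom $x\leq y$. By order-invariance of $\varphi$, we have $P\models\varphi$ if and only if $(P,\leq^P,\lambda',S)\models\varphi$ for the linear order $S$ defined by $\psi_{\leq}$, which by the substitution is equivalent to $(P,\leq^P,\lambda')\models\tilde\varphi$. Now $\tilde\varphi$ is a plain $\FO$-formula over the coloured poset $(P,\leq^P,\lambda')$, of width $w$, so Gajarský et al.'s algorithm decides it in time $g(w,|\tilde\varphi|)\cdot\Vert P\Vert^{O(1)}$; since $|\tilde\varphi|$ is bounded in terms of $|\varphi|$ and $w$, and the colour-refinement and chain computation add only lower-order overhead, this gives the claimed $f(w,|\varphi|)\cdot\Vert P\Vert^2$ bound.

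The main obstacle — and the one point deserving genuine care — is making sure the $\FO$-definability of the linear order really only uses information that stays inside the bounded-width regime: in particular the chain index must be supplied as a colour (it is \emph{not} $\FO$-definable from $\leq^P$ alone), and one must confirm that adding this colour preserves both boundedness of width and the applicability of the cited theorem, i.e. that Gajarský et al.'s result is stated for \emph{coloured} posets of bounded width with an arbitrary finite colour set entering only the parameter $f$. A secondary point is verifying order-invariance is used correctly: $\varphi$ is assumed invariant over \emph{all} linear extensions in the sense of the definition in Section~\ref{sec:logicprelim}, and $S$ is one particular such order, so the equivalence $P\models\varphi \Leftrightarrow (P,S)\models\varphi$ is exactly the content of invariance; one just has to note that the definition of order-invariance quantifies over all linear orders, not only linear extensions of $\leq^P$, which only makes the hypothesis stronger and hence causes no trouble.
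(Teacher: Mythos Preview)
Your proposal is correct and follows essentially the same approach as the paper: compute a chain cover via Felsner et al., refine the colouring by the chain index, define the concatenated linear order by an $\FO$-formula over the refined colouring, substitute into $\varphi$, and invoke Gajarský et al.'s algorithm. Your write-up is in fact more careful than the paper's on the subtleties (disjointness of the $C_i$, the fact that the defined order need not extend $\leq^P$, and that the enlarged colour set only affects the parameter), but the underlying argument is identical.
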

\begin{proof}
  Using the algorithm of~\cite{felsner2003recognition}, we compute a
  chain cover $C_1,\ldots,C_w$ of $(P,\leq^P)$. To obtain a linear
  order on $P$, we just need to arrange the chains in a suitable
  order, which can be done by colouring the vertices with colours
  $\Lambda \times [w]$ via
  \[
  \lambda'(v) = (\lambda(v), j)\text{ for }v \in C_j.
  \]
  Then
  \begin{multline*}
  \varphi_\leq(x,y) := \bigg(\bigvee_{\substack{\lambda_x,\lambda_y \in \Lambda,\\i <
    j}} (\lambda'(x) = (\lambda_x,i) \wedge \lambda'(y) = (\lambda_y,j)
  \bigg)
  \vee
  \\
  \bigg(\bigvee_{\substack{\lambda_x,\lambda_y \in \Lambda,\\i \in [w]}} \lambda'(x)
  = (\lambda_x,i)
  \wedge
  \lambda'(y) = (\lambda_y,i)
  \wedge
  x \leq y
  \bigg)
  \end{multline*}
  defines a linear order on $(P,\leq_P)$ with colouring
  $\lambda'$. After substituting $\varphi_\leq$ for $\leq$ in
  $\varphi$ we may apply Gajarský et al.'s algorithm to check whether
  $P \models \varphi$.
\end{proof}

\section*{Conclusion and Further Research}

We have shown that model-checking for successor-invariant first-order
logic is fixed-parameter tractable on classes of graphs excluding some
fixed graph $H$ as a topological subgraph. This extends previous
results showing tractibility on planar graphs~\cite{EngelmannKS12} and
graphs with excluded minors~\cite{ekk13}. For dense graphs, we showed
how the recent model-checking algorithm by Gajarský et
al.~\cite{gajarsky2015fo} can be adapted to order-invariant $\FO$.

This prompts for further generalisation in two ways: First, can we
close the gap between plain first-order logic and its
successor-invariant counterpart? Next steps could be graph classes
with bounded expansion or with locally excluded minors. However, no
structure theorem comparable to those of Robertson and Seymour and of
Grohe and Marx are known for these graph classes.

Another interesting open question is whether model-checking for
order-invariant first-order logic is tractable on any of the classes
depicted in Figure~\ref{fig:sparseclasses}. Since the Gaifman graph of
a linearly ordered structure is a clique, there is no hope of finding
a ``good'' linear order which can be added to the input structure
without destroying the desirable properties of its Gaifman graph. As
shown in~\cite{GroheS00}, order-invariant first-order logic has a
Gaifman-style locality property (see also~\cite{AndersonMSS12}). It
is, however, not at all clear how this could be turned into an
efficient model-checking algorithm. In particular, no variant of
Gaifman normal form is known for this logic.

\end{document}